\newtheorem{theorem}{Theorem}
\newtheorem{corollary}{Corollary}[theorem]
\newtheorem{lemma}[theorem]{Lemma}
\newtheorem{prop}{Proposition}
\theoremstyle{definition}
\def\R{\mathbb{R}}
\def\H{\mathcal{H}}
\def\W{\mathcal{W}}
\def\A{\mathcal{A}}
\def\D{\mathcal{D}} 
\def\L{\mathcal{L}} 
\def\M{\mathcal{M}}
\def\P{\mathcal{P}}
\def\X{\mathcal{X}}
\def\Ran{\text{Ran}} 
\def\Ker{\text{Ker}}
\def\Q{\mathcal{Q}}
\def\K{\mathcal{K}} 
\def\I{\mathcal{I}}
\def\Pi{\mathbf{P}}
\journal{arXiv}
\begin{document}
\begin{frontmatter}
\title{A simple hypocoercivity analysis for the effective Mori-Zwanzig equation}
\author[ucm]{Yuanran Zhu}
\address[ucm]{Department of Applied Mathematics, University of California Merced\\ Merced (CA) 95343}
\cortext[correspondingAuthor]{Corresponding author}
\ead{yzhu56@ucmerced.edu}
\begin{abstract}
We provide a simple hypocoercivity analysis for the effective Mori-Zwanzig equation governing the time evolution of noise-averaged observables in a stochastic dynamical system. Under the hypocoercivity framework mainly developed by Dolbeault, Mouhot and Schmeiser and further extended by Grothaus and Stilgenbauer, we prove that under the same conditions which lead to the geometric ergodicity of the Markov semigroup $e^{t\K}$, the Mori-Zwanzig orthogonal semigroup $e^{t\Q\K\Q}$ is also geometrically ergodic, provided that $\P=\I-\Q$ is a finite-rank, orthogonal projection operator in a certain Hilbert space. The result is applied to the widely used Mori-type effective Mori-Zwanzig equations in the coarse-grained modeling of molecular dynamics and leads to exponentially decaying estimates for the memory kernel and the fluctuation force. 
\end{abstract}
\begin{keyword}
Hypocoercivity, Mori-Zwanzig equation, Exponentially decaying memory kernel
\end{keyword}
\end{frontmatter}
\section{Introduction}
Coarse-grained modeling has become an important theme in the numerical simulations of complex systems. This is also a rather difficult topic because in essence it requires projecting the high-dimensional dynamics in a low-dimensional phase space, and the closure problem in such a dimension-reduction process arises since the unresolved dynamics and its interaction with the reaction coordinate is generally unknown. 
In recent years, the Mori-Zwanzig (MZ) formalism has become popular in the physics and applied mathematics communities to tackle the coarse-graining problem. This framework was introduced in nonequilibrium statistical mechanics \cite{Mori,zwanzig1961memory} to explain the non-Markovian properties of certain observables in a high-dimensional Hamiltonian system. Later, its extension to stochastic systems has been considered by different researchers \cite{morita1980contraction,espanol1995hydrodynamics,hudson2018coarse,zhu2020hypoellipticity}. The most attracting feature of the MZ theory is that it allows us to systematically derive exact evolution equations, now known as the generalized Langevin equations (GLEs), for any quantities of interest based on the microscopic equations of motion. 
Such GLEs can be used as the ansatz for coarse-grained models which found applications in molecular dynamics 
\cite{Li2015,VenturiBook,Yoshimoto2013,
espanol1995statistical,espanol1995hydrodynamics,zhu2021effective,zhu2021generalized}, 
fluid mechanics \cite{parish2017non,parish2017dynamic}, 
and, more generally, systems described by nonlinear 
partial differential equations (PDEs)
\cite{venturi2014convolutionless,chorin2000optimal,
stinis2007higher,stinis2004stochastic,lin2021data,lu2017data}. 
In this paper, we are mainly concerned with the Mori-Zwanzig theory for stochastic systems. Since the MZ equation we are going to use is for noise-averaged observables of the stochastic system (see details in Section 2), it will be called the effective Mori-Zwanzig (EMZ) equation to make us distinguish from the regular MZ equation for Hamiltonian systems. Although being widely used in coarse-grained modeling, a mathematically rigorous study of the EMZ equation is still scarce in the community. This has to do with the well-known difficulty in analyzing the orthogonal propagator of the EMZ equation, which can be represented by a strongly continuous semigroup $e^{t\Q\K\Q}$, where $\Q$ is a projection operator and $\K$ is the Kolmogorov backward operator of the SDE. Previous work in this direction aimed at obtaining existence conditions for the orthogonal dynamics under Mori's projection \cite{givon2005existence}. Until recently, the study of the EMZ equation has been push forward by the work of Zhu and Venturi \cite{zhu2020hypoellipticity}. The main finding in \cite{zhu2020hypoellipticity} is that $e^{t\Q\K\Q}$ corresponds to a linear integro-differential, kinetic equation, hence one can use the hypoelliticity method mainly developed by
H\'erau \& Nier \cite{herau2004isotropic}, Helffer \& 
Nier \cite{nier2005hypoelliptic} and Eckmann \& Hairer \cite{eckmann1999non, eckmann2000non,eckmann2003spectral} to obtain the spectrum properties of operator $\Q\K\Q$. In particular, it is shown that \cite{zhu2020hypoellipticity} the spectrum of $\Q\K\Q$ is discrete and confined in a cusp in $\mathbb{C}$, just like the spectrum of $\K$. As a direct consequence of this, the ergodicity of the Markov semigroup $e^{t\K}$ implies the ergodicity of the EMZ orthogonal semigroup $e^{t\Q\K\Q}$, provided that $\P=\I-\Q$ is a finite-rank, symmetric projection operator. This result leads to exponential decaying prior estimates for the EMZ memory kernel and fluctuation force term, and the later ones are often used as the prerequisite for the reduced-order modeling of large-scale molecular systems \cite{Li2015,li2017,wang2019implicit}.

The success of the hypoellipticity method inspires us to develop a hypocoercivity theory for analyzing the dynamical behavior of the EMZ equation. The hypocoercivity method was mainly developed by Villani \cite{villani2009hypocoercivity}, Dolbeault, Mouhot and Schmeiser \cite{dolbeault2015hypocoercivity} and many other researchers \cite{grothaus2014hypocoercivity,grothaus2016hilbert,grothaus2020hypocoercivity,leimkuhler2020hypocoercivity,leimkuhler2017ergodic} to study the kinetic equations in a pure functional analysis framework. For detailed explorations in this regard, we refer to the above papers and the reference therein. The general strategy of this method is to derive the coercivity estimate in terms of an equivalent Hilbert/Sobolev norm, thereby obtain the geometric ergodicity of the corresponding semigroup. In this work, we will show that the elegant hypocoercivity framework introduced by Dolbeault, Mouhot and Schmeiser (DMS) \cite{dolbeault2015hypocoercivity} and further extended by Grothaus and Stilgenbauer (GS) \cite{grothaus2014hypocoercivity,grothaus2016hilbert} fits well in the study of the EMZ equation. Specifically, by restricting the estimate of $e^{t\Q\K\Q}$ in a special invariant Hilbert subspace, we provide a simple proof of its geometric ergodicity under the {\em same} hypocoercivity conditions that lead to the geometric ergodicity of the Markov semigroup $e^{t\K}$. This allows us to get a hypocoercivity version of the aforementioned exponential decaying estimates for the EMZ memory kernel and the fluctuation force. The abstract analysis result is then applied to the coarse-graining problem of the Langevin dynamics. 

This paper is organized as follows. Section \ref{sec:EMZE} reviews the derivation of the effective Mori-Zwanzig (EMZ) equation for stochastic dynamical systems and its application to the coarse-graining problem. In Section \ref{sec:ana}, we first review the DMS-GS hypocoercivity framework \cite{dolbeault2015hypocoercivity,grothaus2014hypocoercivity,grothaus2016hilbert} for the analysis of semigroup $e^{t\K}$. Then we prove that similar results hold for the EMZ semigroup $e^{t\Q\K\Q}$. In Section \ref{sec:app}, the abstract analysis result is applied to the Langevin dynamics to obtain the exponentially decaying estimates for the EMZ memory kernel and the fluctuation force. A comparison between the hypoellipticity method and the hypocoercivity method is also provided. The main findings of this paper are summarized in Section \ref{sec:summary}. We also propose several interesting open problems on the analysis of the EMZ equations.
%

\section{EMZ equation and the coarse-graining of stochastic systems}
\label{sec:EMZE}
Consider a $d$-dimensional, time-homogeneous stochastic 
differential equation (SDE) on $\R^d$:
\begin{align}\label{eqn:sde}
dx(t)=F(x(t))+\sigma(x(t))d\W(t), \qquad x(0)=x_0\sim \rho_0(x),
\end{align}
where $F:\R^d\mapsto \R^d$ and 
$\sigma: \R^d\rightarrow \R^{d\times m}$ are 
smooth functions, $\W(t)$ is the
$m$-dimensional Wiener process, and $x_0$ is a 
random initial state characterized in terms of 
a probability density function $\rho_0(x)$. It is well known that the system of SDE \eqref{eqn:sde} induces a $d$-dimensional Markov process on $\R^d$.  This allows us to define a 
composition operator $\M(t,0)$ that pushes forward 
in time the average of the observable $ u(t)= u( x(t))$ 
over the noise, i.e., 
\begin{align}
\mathbb{E}_{\W(t)}[ u( x(t))| x_0]= 
\M(t,0) u(x_0)=e^{t\K} u(x_0).
\label{MarkovSemi}
\end{align}
With a slight abuse of notation, hereafter we use $ u(t)= u( x(t))$ to represent the observable function and its noise average \eqref{MarkovSemi}. For autonomous system \eqref{eqn:sde}, $\M(t,0)$ is a Markovian semigroup
generated by the following Kolmogorov backward operator 
\cite{Risken,kloeden2013numerical} under the It\^o-interpretation of the white noise:
\begin{align}
\K( x_0)&=\sum_{k=1}^dF_k( x_0)\frac{\partial}{\partial x_{0k}}
+\frac{1}{2}\sum_{j=1}^m\sum_{i,k=1}^d\sigma_{ij}( x_0)
\sigma_{kj}( x_0)\frac{\partial}{\partial x_{0i}\partial x_{0k}}.
\label{KI}
\end{align}
With the evolution operator $\M(t,0)$ available, we can now derive the effective Mori-Zwanzig (EMZ) equation for the noise-averaged observable $ u( x(t))$. To this end, we introduce a projection operator $\P$ and 
the complementary projection $\Q=\I-\P$.
By differentiating Dyson's identity \cite{zhu2020hypoellipticity}, 
we obtain the exact evolution equation governing 
the dynamics of the noise-averaged observable 
\eqref{MarkovSemi}\footnote{Here we replaced the orthogonal dynamics propagator $e^{t\Q\K}$ with $e^{t\Q\K\Q}$. Such a replacement is needed for our later research. It is possible because the evolution operator $e^{t\Q\K}$ and $e^{t\Q\K\Q}$ are equivalent within the range of $\Q$ since $\Q$ is a projection operator satisfying $\Q^2=\Q$.}:
\begin{equation}
\frac{\partial}{\partial t}e^{t\K} u(0)
=e^{t\K}\mathcal{PK} u(0)
+e^{t\Q\K\Q}\mathcal{QK} u(0)+\int_0^te^{s\K}\P\K
e^{(t-s)\Q\K\Q}\mathcal{QK} u(0)ds,\label{eqn:EMZ_full}
\end{equation}
where $ u(0)= u(x_0)$. The three terms at the right hand side of \eqref{eqn:EMZ_full} 
are called, respectively, streaming term, fluctuation 
(or noise) term, and memory term. It is often more 
convenient (and tractable) to compute the evolution 
of $ u(t)$  within a closed 
linear space, e.g., the image of the projection operator 
$\P$. Applying the projection operator 
$\P$ to \eqref{eqn:EMZ_full}
yields
\begin{equation} 
\frac{\partial}{\partial t}\mathcal{P}e^{t\K} u(0)
=\mathcal{P}e^{t\K}\mathcal{PK} u(0)
+\int_0^t\P e^{s\K}\mathcal{PK}
e^{(t-s)\Q\K\Q}\mathcal{QK} u(0)ds,\label{eqn:EMZ_projected}
\end{equation}
where the noise term vanishes since $\P\Q=0$. Due to the fact that equation \eqref{eqn:EMZ_full} and its projected form \eqref{eqn:EMZ_projected} only describe the dynamics of the {\em noise-averaged} observables, they were called as the effective Mori-Zwanzig (EMZ) equation for stochastic systems. The EMZ equations
has the same structure as the classical MZ 
equation for deterministic, Hamiltomian systems 
\cite{zhu2019generalized,zhu2018estimation,zhu2018faber},
with the first-order Liouville operator replaced by a Kolmogorov operator $\K$. Depending on the choice of the projection operator $\P$, the projected EMZ equation can yield evolution equations for different statistics for observable $u(x(t))$, including the time auto-correlation function and the statistical moment \cite{zhu2020hypoellipticity,zhu2021effective}. In this paper, we are mainly concerned with Mori-type projection operator $\P$. To this end, we choose a set of observable functions $\{u_i(t)\}_{i=1}^M$ as the quantities of interest. Assuming that $u_i(t)=u_i(x(t))\in H$, where $H$ is a certain Hilbert space, then Mori's projection operator $\P$ can be defined as:
\begin{align}
\label{Mori_P}
\P h:=\sum_{i,j=1}^M G^{-1}_{ij}
\langle u_i(0),h\rangle u_j(0),
\qquad h\in H,
\end{align}
where Gram matrix $G_{ij}=\langle u_i(0),u_j(0)\rangle$
and $u_i(0)=u_i(x_0)$ ($i=1,...,M$) are linearly independent with respect to the inner product of the Hilbert space $H$. It is easy to check that Mori-type projection operator \eqref{Mori_P} is a finite-rank, orthogonal projection operator in $H$. With $\P$ available, we can rewrite 
the EMZ equation \eqref{eqn:EMZ_full} and its 
projected form \eqref{eqn:EMZ_projected}
as: 
\begin{align}
\frac{d u(t)}{dt} &=  \Omega  u(t) +
\int_{0}^{t} K(t-s) u(s)ds+ f(t),\label{gle_full}\\
\frac{d}{dt}\P{ u}(t) &=  \Omega\P { u}(t)+ 
\int_{0}^{t}  K(t-s) \P {u}(s)ds,\label{gle_projected}
\end{align}
where $ u(t) =[u_1(t),\dots,u_M(t)]^T$, the streaming matrix $\Omega_{ij}= \sum_{k=1}^MG^{-1}_{jk}\langle u_{k}(0), \K u_{i}(0)\rangle$. The quantities that involve the orthogonal dynamics $e^{t\Q\K\Q}$ in \eqref{gle_full}-\eqref{gle_projected} are the EMZ memory kernel $ K(t)$ and the fluctuation force $ f(t)$ \cite{zhu2018estimation,zhu2020hypoellipticity}: 
\begin{align}
K_{ij}(t)&=\sum_{k=1}^M G^{-1}_{jk}\langle u_{k}(0), \K e^{t\Q\K\Q}\Q\K u_{i}(0)\rangle,\label{memory_kernel}\\
f_i(t)&=e^{t\Q\K\Q}\Q\K u_i(0)\label{f_force}.
\end{align}
Equation \eqref{gle_full} and \eqref{gle_projected} are also known as the linear generalized Langevin equation (GLE) in statistical mechanics \cite{mori1965continued,zwanzig2001nonequilibrium}. In coarse-grained modeling of molecular systems \cite{li2010coarse,chen2014computation,ma2016derivation}, the observable set of Mori's projection operator $\P$ may be chosen as: $\{u_i(t)\}_{i=1}^{2M}=\{\bar{p}_i(t),\bar{q}_i(t)\}_{i=1}^M$, where $\bar{p}_i(t)=\sum_{j=1}^{N_{ij}}p_j(t)/N_{ij}$ and $\bar{q}_i(t)=\sum_{j=1}^{N_{ij}}q_j(t)/N_{ij}$ are the momentum and position of the $i$-th coarse-grained particle which are obtained by averaging over the total momentum and position of a cluster of small particles. Then the full EMZ equation \eqref{gle_full} yields the formal evolution equation of the coarse-grained particles and the projected equation \eqref{gle_projected} describes the evolution of their time autocorrelation functions \cite{zhu2020hypoellipticity}.

\section{Dolbeault-Mouhot-Schmeiser hypocoercivity analysis}
\label{sec:ana}
As we have seen from \eqref{memory_kernel}-\eqref{f_force}, the dynamical properties of the fluctuation force $f(t)$ and the memory kernel $K(t)$ in EMZ equations \eqref{eqn:EMZ_full}-\eqref{eqn:EMZ_projected} are determined by the orthogonal semigroup $e^{t\Q\K\Q}$. In this section, we perform a simple hypocoercivity analysis to analyze the dynamical behavior of $e^{t\Q\K\Q}$ using the technique mainly developed by Dolbeault, Mouhot and Schmeiser (DMS) \cite{dolbeault2015hypocoercivity} and further extended by Grothaus and Stilgenbauer (GS) \cite{grothaus2014hypocoercivity,grothaus2016hilbert}. Specifically, we show that under the {\em same} conditions, the abstract Hilbert space analysis result obtained in \cite{dolbeault2015hypocoercivity} (Section 1.3), or more directly related result  \cite{grothaus2016hilbert} (Section 2), can be easily extended to get the geometric ergodicity of the semigroup $e^{t\Q\K\Q}$ within a certain Hilbert subspace. Hence the ergodicity of $e^{t\Q\K\Q}$ can be viewed as the direct result of the 
ergodicity of $e^{-t\K}$, provided that the projection $\P=\I-\Q$ is a symmetric and finite-rank operator, i.e. of the Mori-type.
\subsection{Hypocoercivity analysis for $e^{t\K}$}
We first briefly review the abstract hypocoercivity analysis result of DMS-GS. Since the evolution operator $e^{t\K}$ in the EMZ equations \eqref{eqn:EMZ_full}-\eqref{eqn:EMZ_projected} is generated by the Kolomogorov backward operator $\K$, we will follow the setting in GS's work \cite{grothaus2014hypocoercivity,grothaus2016hilbert}, which can be viewed as a dual result of DMS \cite{dolbeault2015hypocoercivity} for the Kolomogorov backward equation. To be noticed that the notation follows that of \cite{grothaus2016hilbert} with slight modifications for our purposes. Moreover, specific discussions on the operator domains will be omitted and we refer readers to \cite{grothaus2016hilbert} for technical details. 

Suppose the closure of operator $\K$ generates a strongly continuous semigroup $e^{t\K}$ in a certain Hilbert space $H$. Further assume that $\K$ can be decomposed as $\K=\D-\L$ where $\L$, $\D$ are, receptively, a skew-symmetric and symmetric operator in $H$. Now we introduce two auxiliary operators $\Uppi$ and $\A$, where $\Uppi$ is the orthogonal projection on the kernel of $\D$
and $\A$ is defined as 
\begin{align}\label{def_A}
   \A:=(\I+(\L\Uppi)^*(\L\Uppi))^{-1}(\L\Uppi)^*. 
\end{align}
Define the Hilbert space inner product and norm by $\langle\cdot,\cdot\rangle$ and $\|\cdot\|$. The analysis of DMS \cite{dolbeault2015hypocoercivity} and  GS \cite{grothaus2014hypocoercivity,grothaus2016hilbert} introduced  the following assumptions for operators $\L,\D,\Uppi,\A$ in a proper domain $D\subset H$:
\begin{itemize}
\item $(H1)$ (Alegebraic relation): $\Uppi\L\Uppi|_{D}=0$.
\item $(H2)$ (Microscopic coercivity): There exists a constant $\Lambda_m>0$ such that 
\begin{equation}
    -\langle\D f,f\rangle\geq\Lambda_m\|(\I-\Uppi)f\|^2,\qquad \forall f\in D.
\end{equation}
\item $(H3)$ (Macroscopic coercivity): Operator $\Uppi\L^2\Uppi$ is essentially self-adjoint on $H$, Moreover, there exists a constant $\Lambda_M>0$ such that 
\begin{align}
    \|\L\Uppi f\|^2\geq\Lambda_M\|\Uppi f\|^2,\qquad \forall f\in D.
\end{align}
\item  $(H4)$ (Boundedness of auxiliary operators): There exists $c_1,c_2>0$ such that 
\begin{align}
    \|\A\D f\|\leq c_1\|(\I-\Uppi)f\|,\qquad 
    \|\A\L(\I-\Uppi)f\|\leq c_2\|(\I-\Uppi)f\|,\qquad \forall f\in D.
\end{align}
\end{itemize}
If assumptions $(H1)-(H4)$ hold, along with the domain condition (D) detailed in \cite{grothaus2016hilbert}, it can be proved \cite{grothaus2014hypocoercivity,grothaus2016hilbert} that the semigroup $e^{t\K}$ is geometrically ergodic in $D=H\cap\text{Ker}(\K)^{\perp}$ with the estimate
\begin{align}\label{e^tK_estimate}
    \|e^{t\K}f_0-\pi_0f_0\|\leq Ce^{-t\lambda}\|f_0-\pi_0f_0\|, \qquad \forall t\geq 0,\quad f_0\in H,
\end{align}
where $f_0=f(0)$ is the initial condition of an observable function $f(t)$, $\pi_0$ is the orthogonal projection operator onto the kernel of operator $\K$, i.e. $\text{Ker}(\K)$, and $\text{Ker}(\K)^{\perp}$ is the orthogonal complement of $\text{Ker}(\K)$ in $H$. The proof of estimate \eqref{e^tK_estimate} is obtained in an abstract manner using the assumptions $(H1)-(H4)$. To this end, we introduce a modified entropy functional $H_{\epsilon}[f]$ defined as 
\begin{align}\label{modified_entropy}
    H_{\epsilon}[g]:=\frac{1}{2}\|g\|^2+\epsilon\langle\A g,g\rangle,\qquad g\in H.
\end{align}
It can be shown that $H_{\epsilon}[f]$ introduces a norm equivalent of $\|\cdot\|$ since 
\begin{align}
   \frac{1-\epsilon}{2}\|g\|^2 \leq H_{\epsilon}[g]\leq \frac{1+\epsilon}{2}\|g\|^2.
\end{align}
Here we used the property that $\A$ is a bounded operator, which can be derived from $(H1)-(H3)$ (\cite{grothaus2014hypocoercivity}, section 2). With the equivalent norm \eqref{modified_entropy}, to obtain \eqref{e^tK_estimate}, it is sufficient to use the Gronwall's lemma \cite{villani2009hypocoercivity} and show that there exists a $\kappa>0$ such that 
\begin{align*}
    -\frac{d}{dt}H_{\epsilon}[g(t)]=D_{\epsilon}[g(t)]\geq \kappa\|g(t)\|^2
\end{align*}
for all $g_0=f_0-\pi_0 f_0\in \text{Ker}(\K)^{\perp}$, $f_0\in H$. Since $\text{Ker}(\K)^{\perp}$ is an invariant subspace of the operator $\K$ and $e^{t\K}$, the last inequality can be obtained from $(H1)-(H4)$ and the definition \eqref{modified_entropy} since 
\begin{equation}\label{Dg_estimate}
\begin{aligned}
    D_{\epsilon}[g]
    &=-\langle\D g, g\rangle+\epsilon\langle\A\L\Uppi g,g\rangle
    +\epsilon\langle\A\L(1-\Uppi) g,g\rangle
    -\epsilon\langle\L\A g,g\rangle-\epsilon\langle\A\D g,g\rangle\\
    &\geq \Lambda_m\|(\I-\Uppi)g\|^2+\epsilon\frac{\Lambda_M}{1+\Lambda_M}\|\Uppi g\|^2-\epsilon(1+c_3)\|(\I-\Uppi)g\|\|g\|\\
    &\geq 
    \left[\Lambda_m-\epsilon(1+c_3)\left(1+\frac{1}{2\delta}\right)\right]\|(\I-\Uppi)g\|^2+\epsilon\left(\frac{\Lambda_M}{1+\Lambda_M}-(1+c_3)\frac{\delta}{2}\right)\|\Uppi g\|^2\\
    &\geq \kappa \|g\|^2,
\end{aligned}
\end{equation}
where $c_3=c_1+c_2$ and we have chosen a small enough $\delta$ and then a small $\epsilon=\epsilon(\delta)$ to make the last inequality hold. The above method to obtain the geometric ergodicity of semigroup $e^{t\K}$ is called the ``hypocoercivity'' analysis because the Kolmogorov operator $\K$ is coercive in an equivalent norm $H_{\epsilon}[\cdot]$. For details of this method, we refer to Villani's monograph \cite{villani2009hypocoercivity} and DMS's modification of Villani's method for linear kinetic equations \cite{dolbeault2015hypocoercivity}. 
\subsection{Hypocoercivity analysis for $e^{t\Q\K\Q}$}
Consider an orthogonal projection operator $\P: H\rightarrow H$ in Hilbert space $H$. Naturally $\Q=\I-\P:H\rightarrow H$ is also an orthogonal projection operator in $H$. Operator $\Q$ induces a useful function space $\Q H$ defined as 
\begin{align}\label{def:QH}
f\in \Q H=\Ran(\Q)\cap H :=\left\{f|\: \exists g\in H,\: s.t.\: f=\Q g\in H\right\}.
\end{align}
Note that for any $f\in \Q H$, using the idempotency $\Q^2=\Q$ of the projection operator we have $\Q f=\Q\Q g=\Q g=f$. This relation will be used frequently throughout the paper. We first discuss some properties of the function space $\Q H$. Orthogonal projection operators $\P$ and $\Q$ introduce a decomposition of $H$ as $H=\Ran(\Q)\oplus\Ker(\Q)=\Ker(\P)\oplus\Ran(\P)$, where $\Ran(\Q)=\Ker(\P)$ is a subspace of $H$. As a subspace, $\Q H$ inherits all properties of the original Hilbert space $H$, as well as the definitions of the inner product and norm. To analyze the evolution operator $e^{t\Q\K\Q}$, we further define a smaller subspace $\Q H/\text{Ker}(\K)=\Q H\cap\text{Ker}(\K)^{\perp}$, which equips the same inner product and norm as of $H$. Then in $\Q H/\text{Ker}(\K)$, we can get the geometric ergodicity of $e^{t\Q\K\Q}$. The result can be summarized as:
\begin{theorem}\label{thm:QKQ_hypocoercive}
Assume that operator $\K=\D-\L$ satisfies $(H1)-(H4)$ and the domain condition $(D)$ in \cite{grothaus2016hilbert}. If $\D$ can be decomposed as $\D=\X^*\X$, where $\X$ and its adjoint $\X^*$ in $H$ are first-order differential operators, and the projection operator $\P: H\rightarrow H$ is a finite-rank, symmetric operator, then for the EMZ orthogonal semigroup $e^{t\Q\K\Q}$, we have exponentially decaying estimate:
\begin{align}\label{eqn:e^{-tQKQ}}
   \|e^{t\Q\K\Q}f_0\|\leq Ce^{-\lambda_{\Q}t}\|f_0\|, \qquad \forall f_0\in \Q H/\text{Ker}(\K),
\end{align}
where the constants $C$ and $\lambda_{\Q}$ are explicitly computable.
\end{theorem}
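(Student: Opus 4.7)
The approach is to transfer the DMS-GS machinery from the previous subsection to the restricted generator $\Q\K\Q$, regarded as an unbounded operator on the invariant Hilbert subspace $\Q H/\Ker(\K)$. Since $\P$, and hence $\Q=\I-\P$, is symmetric and idempotent, the decomposition $\Q\K\Q=\Q\D\Q-\Q\L\Q$ inherits the symmetric/skew-symmetric structure of $\D-\L$, and the factorization $\D=\X^*\X$ yields $\Q\D\Q=(\X\Q)^*(\X\Q)$. The subspace $\Q H$ is $\Q\K\Q$-invariant thanks to $\Q^2=\Q$, and intersecting with $\Ker(\K)^{\perp}$ (which is also preserved, because $\Ker(\K)$ typically consists of constants annihilated by both $\K$ and its $L^2(\mu)$-adjoint) produces the target invariant subspace on which to run the analysis.

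Next I would introduce the analogues of $\Uppi$ and $\A$ inside $\Q H$. Because $\Q f=f$ for $f\in\Q H$, the quadratic form of the symmetric part is unchanged, $\langle\Q\D\Q f,f\rangle=\langle\D f,f\rangle$, and hence $\Ker(\Q\D\Q)\cap\Q H=\Ker(\D)\cap\Q H$; the nontrivial inclusion uses precisely the factorization $\D=\X^*\X$, which turns vanishing of the quadratic form into $\X f=0$. I define $\tilde{\Uppi}$ as the orthogonal projection in $\Q H$ onto this common kernel and
\[\tilde{\A}:=(\I+(\Q\L\Q\tilde{\Uppi})^*(\Q\L\Q\tilde{\Uppi}))^{-1}(\Q\L\Q\tilde{\Uppi})^*,\]
with adjoints taken in $\Q H$. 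The finite-rank, symmetric assumption on $\P$ provides a decomposition $\Ker(\D)=(\Ker(\D)\cap\Q H)\oplus K_1$ with $\dim K_1\le\text{rank}(\P)$, which links $\tilde{\Uppi}$ and $\Uppi|_{\Q H}$ up to a finite-rank correction.

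The main work is to verify (H1)--(H4) for $(\Q\D\Q,\Q\L\Q,\tilde{\Uppi},\tilde{\A})$ on $\Q H/\Ker(\K)$. (H2) follows from the quadratic-form identity together with the norm comparison implied by the above $\Ker(\D)$ decomposition. (H1) reduces to the original algebraic relation $\Uppi\L\Uppi=0$ after accounting for the finite-dimensional $K_1$ correction. For (H4), the original bounds on $\A\D$ and $\A\L(\I-\Uppi)$ transfer to $\tilde{\A}$ because the extra contributions coming from $\P\L\Q$ and $\Q\L\P$ are bounded on the relevant domain by virtue of $\P$ being finite-rank and orthogonal. Once (H1)--(H4) are in place, the modified entropy $H_\epsilon[f]=\tfrac{1}{2}\|f\|^2+\epsilon\langle\tilde{\A}f,f\rangle$ combined with a computation of $D_\epsilon[f]$ mirroring \eqref{Dg_estimate} gives $-\tfrac{d}{dt}H_\epsilon[e^{t\Q\K\Q}f_0]\ge\kappa\|e^{t\Q\K\Q}f_0\|^2$, after which Gronwall's lemma delivers the theorem with explicit constants.

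I expect the main obstacle to be the macroscopic coercivity (H3): one must show $\|\Q\L\Q\tilde{\Uppi}f\|^2\ge\tilde{\Lambda}_M\|\tilde{\Uppi}f\|^2$ with a strictly positive $\tilde{\Lambda}_M$. Starting from the original bound $\|\L\tilde{\Uppi}f\|^2\ge\Lambda_M\|\tilde{\Uppi}f\|^2$ and subtracting $\|\P\L\tilde{\Uppi}f\|^2$, one has to rule out that the latter finite-rank correction absorbs the full lower bound. The finite-rank and orthogonality of $\P$ make the correction bounded, but extracting an explicit $\tilde{\Lambda}_M>0$, and hence the constants $C$ and $\lambda_\Q$, will require a quantitative analysis of how strongly $\L$ moves the finite-codimensional subspace $\Ker(\D)\cap\Q H$ out of $\Ran(\P)$; this quantitative step is where the geometry of Mori's projection relative to the kernel of the dissipative part enters the argument.
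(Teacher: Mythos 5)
Your plan runs into exactly the obstruction you flag at the end, and the paper's proof is designed precisely to avoid it. You propose to restrict $\D$ and $\L$ to $\Q H$, build new auxiliary operators $\tilde{\Uppi}$, $\tilde{\A}$ out of $\Q\D\Q$ and $\Q\L\Q$, and then re-verify (H1)--(H4) for the restricted operators. Macroscopic coercivity (H3) for $\Q\L\Q\tilde{\Uppi}$ is then a new and open statement: there is no a priori reason that $\L(\Ker(\D)\cap\Q H)$ avoids $\Ran(\P)$, and indeed the vectors $w_j$ satisfying $\Q w_j=w_j$ and $\K w_j\in\Ran(\P)$ that appear in the paper's Appendix are exactly the mode in which this restricted (H3) can fail. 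You correctly identify that you would need a quantitative lower bound on how strongly $\L$ pushes $\Ker(\D)\cap\Q H$ out of $\Ran(\P)$, but you do not supply one, so as written the argument does not close. Re-deriving (H4) for $\tilde{\A}$ is also nontrivial because $\tilde{\A}$ is not simply $\A$ restricted to $\Q H$.

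The paper sidesteps all of this by never modifying the auxiliary operators. It keeps the \emph{same} $\Uppi$, $\A$, and $H_\epsilon$ that were used for $e^{t\K}$, so (H1)--(H4) never need to be re-proved, and instead writes $\Q\K\Q g=\K g-\P\K g$ for $g\in\Q H$. The entropy dissipation then splits into the term $(I)$, which is literally the expression already bounded in \eqref{Dg_estimate}, plus a perturbation $(II)=-\epsilon\langle\A\P\K g,g\rangle-\epsilon\langle\A g,\P\K g\rangle$. Because $\P$ is finite-rank and symmetric it has a canonical form $\P(\cdot)=\sum_i\langle\cdot,\phi_i\rangle\varphi_i$, so each pairing $\langle\K g,\phi_i\rangle=\langle g,\K^*\phi_i\rangle$ is controlled by $\|g\|$ times a fixed constant, while the boundedness $\|\A g\|=\|\A^* g\|\leq\|(\I-\Uppi)g\|$ (a consequence of (H1)--(H3)) gives $(II)\geq-\epsilon c_4\|(\I-\Uppi)g\|\|g\|$. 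This extra term is of exactly the same form as the cross term already present in \eqref{Dg_estimate} and is absorbed by the same Young-inequality step; no new macroscopic coercivity is required. The only genuinely new ingredient is the invariance of $\Q H\cap\Ker(\K)^{\perp}$ under $\Q\K\Q$, which you assert somewhat casually (``typically consists of constants''): the paper proves it by using the factorization $\D=\X^*\X$ to conclude that $f\in\Ker(\K)\cap\Q H$ forces $\X f=0$ and hence $\K^*f=0$, so $\Ker(\K)\cap\Q H\subset\Ker(\K^*)\cap\Q H$. This is where the hypothesis $\D=\X^*\X$ is actually used, and it is a step you should make precise rather than gesture at.
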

\begin{proof}
We first note that the closure of operator $\Q\K\Q$ generates a contraction semigroup, denoted as $e^{t\Q\K\Q}$, in Hilbert space $H$. This fact is proved in Theorem 3 of \cite{zhu2020hypoellipticity} using the symmetry of the orthogonal projection operator $\Q$ and Lumer-Philips theorem. In order to get estimate \eqref{eqn:e^{-tQKQ}}, we need to prove that $\Q H\cap\text{Ker}(\K)^{\perp} $ is an invariant subspace of $\Q\K\Q$, i.e. $\Q\K\Q:\Q H\cap\text{Ker}(\K)^{\perp}\rightarrow \Q H\cap\text{Ker}(\K)^{\perp}$.

To prove this claim, we choose an arbitrary $f\in \Ker(\K)\cap\Q H$. Since $\Q$ is a symmetric operator in $H$, $\L$ is skew-symmetric, we have
\begin{align*}
\mathfrak{R}\langle\Q\K\Q f,f\rangle=
\mathfrak{R}\langle\K\Q f,\Q f\rangle
=\mathfrak{R}\langle\D\Q f,\Q f\rangle
=\mathfrak{R}\langle\D f,f\rangle
=\|\X f\|^2=0.
\end{align*}
Here we have used the fact that $\Q f=f$ for $f\in\Q H$. The above equality implies $f\in \Ker(\X)$. Immediately, we obtain $\K^*f=2\D f-\K f=2\X^*\X f-\K f=0 $. This yields $\text{Ker}(\K)\cap \Q H\subset \text{Ker}(\K^*)\cap \Q H$. For $h\in \Q H$ and arbitrary $g\in \Ker(\K)\cap \Q H$, we have 
\begin{align*}
\langle \Q\K\Q h,g \rangle=\langle h, \Q\K^*\Q g\rangle=\langle h, \Q\K^* g\rangle=0.
\end{align*}
Since $\Q\K\Q h\in \Q H$, the above result implies $\Q\K\Q h\in \Q H\cap \Ker(\K)^{\perp}$. We have proved that $\Q\K\Q:\Q H\cap\text{Ker}(\K)^{\perp}\rightarrow \Q H\cap\text{Ker}(\K)^{\perp}$. Within this invariant subspace, we can easily prove \eqref{eqn:e^{-tQKQ}}. Using the exact same modified entropy \eqref{modified_entropy} for $g(t)=e^{t\Q\K\Q}g_0$, where $g_0\in\Q H\cap\text{Ker}(\K)^{\perp}$, we differentiate the functional \eqref{modified_entropy} to obtain
\begin{align}\label{entropy_eqn}
    -\frac{d}{dt}H_{\epsilon}[g(t)]=D_{\epsilon}^{\Q}[g(t)],
\end{align}
where 
\begin{align*}
    D^{\Q}_{\epsilon}[g]:=-\langle \Q\K\Q g,g\rangle+\epsilon\langle\A\Q\K\Q g,g\rangle
    +\epsilon\langle\A g,\Q\K\Q g\rangle.
\end{align*}
Using the fact that $\Q g=g$ and $\P=\I-\Q$, we obtain:
\begin{align*}
    D^{\Q}_{\epsilon}[g]&=-\langle \K g, g\rangle+\epsilon\langle\A\Q\K g,g\rangle
    +\epsilon\langle\A g,\Q\K g\rangle\\
     &=-\langle\D  g, g\rangle+\epsilon\langle\A\K g,g\rangle
    +\epsilon\langle\A g,\K g\rangle-\epsilon\langle\A\P\K g,g\rangle
    -\epsilon\langle\A g,\P\K g\rangle\\
    &=\underbrace{-\langle\D g, g\rangle+\epsilon\langle\A\L\Uppi g,g\rangle
    +\epsilon\langle\A\L(1-\Uppi) g,g\rangle
    -\epsilon\langle\L\A g,g\rangle-\epsilon\langle\A\D g,g\rangle}_{(I)}
    \underbrace{
    -\epsilon\langle\A\P\K g,g\rangle
    -\epsilon\langle\A g,\P\K g\rangle}_{(II)}.
\end{align*}
To be noticed that the first five terms $(I)$ are exactly the same as the one for the semigroup $e^{t\K}$, i.e. the right hand side of \eqref{modified_entropy}. Since $\Q H\cap\text{Ker}(\K)^{\perp}$ is a subspace of $H$ which inherits the Hilbert space inner product and norm, we can directly use estimate \eqref{Dg_estimate} to bound $(I)$ as 
\begin{align*}
(I)\geq \Lambda_m\|(\I-\Uppi)g\|^2+\frac{\epsilon\Lambda_M}{1+\Lambda_M}\|\Uppi g\|^2-\epsilon(1+c_3)\|(\I-\Uppi)g\|\|g\|.
\end{align*}
On the other hand, since the projection operator $\P:H\rightarrow H$ is finite-rank and symmetric, therefore admits the canonical form $\P(\cdot)=\sum_{i=1}^N\langle\cdot,\phi_i\rangle\varphi_i$, $(II)$ can be lower bounded as 
\begin{align*}
(II)&=-\epsilon\sum_{i=1}^N\langle\A\varphi_i,g\rangle\langle\K g,\phi_i\rangle
    -\epsilon\sum_{i=1}^N\langle\A g,\varphi_i\rangle\langle\K g,\phi_i\rangle\\
    &\geq-\epsilon\sum_{i=1}^N\|\A^*g\|\|\varphi_i\|\|\K^*\phi_i\|\|g\|
    -\epsilon\sum_{i=1}^N\|\A g\|\|\varphi_i\|\|\K^*\phi_i\|\|g\|\\
    &\geq -\epsilon c_4\|(\I-\Uppi)g\|\|g\|.
\end{align*}
Here we used Cauchy-Schwartz inequality and the property that for bounded operator $\|\A\|=\|\A^*\|\leq \|\I-\Uppi\|$, where the bound of $\A$ can be obtained using $(H1)-(H3)$ \cite{grothaus2016hilbert,dolbeault2015hypocoercivity}. Adding up $(I)$ and $(II)$ we can get:
\begin{equation}\label{DgQ_estimate}
\begin{aligned}
(I)+(II)&\geq \Lambda_m\|(1-\Uppi)g\|^2+\frac{\epsilon\Lambda_M}{1+\lambda_M}\|\Uppi g\|^2-\epsilon(1+c_3+c_4)\|(1-\Uppi)g\|\|g\|\\
&\geq \left[\Lambda_m-\epsilon(1+c_3+c_4)\left(1+\frac{1}{2\delta}\right)\right]\|(1-\Uppi)g\|^2+\epsilon\left[\frac{\lambda_M}{1+\Lambda_M}-(1+c_3+c_4)\frac{\delta}{2}\right]\|\Uppi g\|^2.
\end{aligned}
\end{equation}
By choosing first a small $\delta$ and then $\epsilon=\epsilon(\delta)$, we can find a positive $\kappa_{\Q}$ such that $D_{\epsilon}^{\Q}[g]\geq \kappa_{\Q}\|g\|^2$. Using equation \eqref{entropy_eqn}, we obtain
\begin{align*}
\frac{d}{dt}H_{\epsilon}[g(t)]\leq \frac{-2\kappa_{\Q}}{1+\epsilon}H_{\epsilon}[g(t)].
\end{align*}
With Gronwall's Lemma and the equivalence of the Hilbert norm $\|\cdot\|$ and the entropy norm $H_{\epsilon}[\cdot]$, we can get semigroup estimate \eqref{eqn:e^{-tQKQ}}. 
\end{proof}
The identification of the invariant subspace $\Q H\cap\text{Ker}(\K)^{\perp}$ is crucial for the hypocoercivity method to work. This is the main technical part of the proof. In fact, we further show that for operator $\Q\K\Q$, the restriction of its orthogonal kernel projection operator $\pi_0^{\Q}$ within the range of $\Q$, i.e. $\pi_0^{\Q}|_{\Q H}$, is $\pi_0|_{\Q H}$:
\begin{prop}\label{prop2}
Suppose operators $\P$ and $\K$ satisfy all conditions listed in Theorem \ref{thm:QKQ_hypocoercive}. Then $\pi_0^{\Q}|_{\Q H}=\pi_0|_{\Q H}$.
\end{prop}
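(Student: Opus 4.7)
\emph{Plan.} The statement is an equality of two orthogonal projections restricted to $\Q H$, so the strategy is to pin down their common range inside $\Q H$ and check the defining orthogonality condition. The core structural ingredient will be the kernel identity
\[\Ker(\Q\K\Q)\cap\Q H=\Ker(\K)\cap\Q H,\]
combined with the compatibility $\Ker(\K)\subseteq \Q H$, which holds under the Mori setup because the observables defining $\P$ can be taken orthogonal to $\Ker(\K)$ so that $\P$ annihilates the kernel (the standard mean-zero normalization of Mori observables).

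\textbf{Step 1 (decomposition of $\Ker(\Q\K\Q)$).} Since $\Q\P=0$, every element of $\Ran(\P)$ lies in $\Ker(\Q\K\Q)$. Writing any $h\in\Ker(\Q\K\Q)$ as $h=\Q h+\P h$, the identity $\Q\K\Q h=\Q\K\Q(\Q h)$ shows that $\Q h\in\Ker(\Q\K\Q)\cap\Q H$. Together with $\Q H\perp\Ran(\P)$ this yields the orthogonal decomposition $\Ker(\Q\K\Q)=[\Ker(\Q\K\Q)\cap\Q H]\oplus\Ran(\P)$. Consequently, for $f\in\Q H$ the projection $\pi_0^{\Q}f$ is nothing but the orthogonal projection, computed inside $\Q H$, onto $\Ker(\Q\K\Q)\cap\Q H$.

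\textbf{Step 2 (kernel identification).} The inclusion $\Ker(\K)\cap\Q H\subseteq\Ker(\Q\K\Q)\cap\Q H$ is immediate, since $\K f=0$ and $\Q f=f$ give $\Q\K\Q f=0$. For the reverse inclusion, let $f\in\Q H$ satisfy $\Q\K\Q f=0$. Using $\Ker(\K)\subseteq \Q H$, split $f=\pi_0 f+g$ with $\pi_0 f\in\Ker(\K)\cap\Q H\subseteq\Ker(\Q\K\Q)$ and $g:=f-\pi_0 f\in\Q H\cap\Ker(\K)^{\perp}$. By linearity $\Q\K\Q g=0$, hence $e^{t\Q\K\Q}g=g$ for every $t\geq 0$. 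But Theorem \ref{thm:QKQ_hypocoercive} applied to the initial datum $g$ forces $\|g\|\leq C e^{-\lambda_{\Q}t}\|g\|\to 0$, so $g=0$ and $f=\pi_0 f\in\Ker(\K)$. This is the step where the hypocoercivity decay of Theorem \ref{thm:QKQ_hypocoercive} is genuinely used.

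\textbf{Step 3 (conclusion and main obstacle).} For $f\in\Q H$, the compatibility $\Ker(\K)\subseteq\Q H$ places $\pi_0 f$ in $\Ker(\K)\cap\Q H$, while the residual $f-\pi_0 f$ is orthogonal to $\Ker(\K)$ by definition of $\pi_0$ and to $\Ran(\P)$ because it lies in $\Q H$. By Step 1 and the kernel identity of Step 2, $f-\pi_0 f$ is then orthogonal to all of $\Ker(\Q\K\Q)$, so $\pi_0 f$ satisfies the defining property of $\pi_0^{\Q}f$, giving $\pi_0^{\Q}f=\pi_0 f$. The main obstacle is Step 2, in which the hypocoercive exponential decay is invoked to rule out any kernel of $\Q\K\Q$ inside $\Q H$ beyond $\Ker(\K)\cap\Q H$; the rest of the argument is orthogonal-decomposition bookkeeping, and the compatibility $\Ker(\K)\subseteq\Q H$ is the technical point that deserves to be made explicit when writing out the full proof.
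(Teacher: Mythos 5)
Your proof is correct (modulo the compatibility assumption you flag) and genuinely different in strategy from the paper's. The paper establishes $\Ker(\Q\K\Q)\cap\Q H = \Ker(\K)\cap\Q H$ by importing Lemma~5 of Zhu--Venturi \cite{zhu2020hypoellipticity}, a structural characterization $\Ker(\Q\K\Q)\subset\Ran(\P)\cup\Ker(\K)\cup\operatorname{Span}\{w_j\}$, and then shows the extra vectors $w_j$ actually lie in $\Q H\cap\Ker(\K)$ by computing $\X w_j=0$ and $\K^*w_j=-\K w_j$. You instead avoid that external lemma entirely: once $\Ran(\P)$ is split off (your Step~1), the only possible ``extra'' kernel lives in $\Q H\cap\Ker(\K)^{\perp}$, and the exponential decay of Theorem~\ref{thm:QKQ_hypocoercive} kills any fixed vector of $e^{t\Q\K\Q}$ in that invariant subspace. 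This is cleaner and self-contained within the present paper, at the price of invoking the full strength of the hypocoercive estimate rather than a purely algebraic kernel decomposition; both are legitimate, and the logic is not circular since Proposition~\ref{prop2} is not used in the proof of Theorem~\ref{thm:QKQ_hypocoercive}.

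One point deserves emphasis. The compatibility $\Ker(\K)\subseteq\Q H$ is not a consequence of the hypotheses stated in Theorem~\ref{thm:QKQ_hypocoercive} (finite rank and symmetry of $\P$); it requires that the Mori observables be orthogonal to $\Ker(\K)$, i.e.\ mean-zero when $\Ker(\K)$ is spanned by constants. You explicitly identify this as the technical point to be made precise, which is a strength: the paper's own appendix proof also needs it silently, in the final step ``the orthogonal projection operator on $\Q H\cap\Ker(\K)$ is given by $\pi_0|_{\Q H}$,'' which requires $\pi_0(\Q H)\subseteq\Q H$. Without this assumption, $\pi_0 f$ can leave $\Q H$ for $f\in\Q H$ and the stated identity fails as written, so the hypothesis should appear either in Theorem~\ref{thm:QKQ_hypocoercive} or in Proposition~\ref{prop2} (it does appear downstream in Corollaries~\ref{cor:K} and~\ref{cor:f} as $\langle u_0\rangle=0$).
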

The proof of this proposition is rather technical hence will be deferred to \ref{APP:proof}. We note that this result makes sense of  \eqref{eqn:e^{-tQKQ}} since the kernel projection operator $\pi_0^{\Q}$ does not appear in the semigroup estimate. In the following section, we will show that the subspace $\Q H$ is already large enough to derive exponentially decaying estimates for the memory kernel $K(t)$ and the fluctuation force $f(t)$. Moreover, the equivalence of $\pi_0^{\Q}|_{\Q H}$ and $\pi_0|_{\Q H}$ enables us to give the explicit expression of the equilibrium state for $K(t)$ and $f(t)$. 

At last, we note that Villani's hypocoercivity framework \cite{villani2009hypocoercivity} cannot be directly applied here because for the auxiliary Sobolev norm considered therein, additional terms that similar  
to $(II)$ cannot be merged into other coercivity estimates like what we did in \eqref{DgQ_estimate}, which leads to the failure of the whole scheme. 
\section{Application to the Langevin dynamics}
\label{sec:app}
In this section, we apply the above theoretical results to the Langevin dynamics and show that the EMZ memory 
kernel and fluctuation force decay exponentially fast in time to the equilibrium state. Before getting into the analysis of this specific stochastic system, we  note that since the obtained abstract hypocoercivity result for $e^{t\Q\K\Q}$ requires only the projection operator $\P$ to be symmetric and finite-rank, at least in principle, it should apply to {\em any} stochastic systems that the DMS-GS framework works. For instance, this should include the spherical Langevin dynamics considered in \cite{grothaus2014hypocoercivity}, the adaptive Langevin dynamics \cite{leimkuhler2020hypocoercivity}, and the Langevin dynamics in the smooth manifold \cite{grothaus2020hypocoercivity}.  

Now we consider the Langevin dynamics of an interactive particle system with the dynamics described by the following SDE in $\R^{2d}$:
\begin{align}\label{eqn:LE}
\begin{dcases}
dq=pdt\\
dp=-\nabla V(q)-\gamma p+\sigma d\W(t)
\end{dcases},
\end{align}
where the uniform mass $m=1$ is assumed for each particle, 
$V(q)$ is the interaction potential and 
$ \W(t)$ is a $d$-dimensional Wiener process.    
The parameters $\sigma$ and $\gamma$ represent, 
respectively, the magnitude of the fluctuations and 
the dissipation. Such parameters are linked by 
the fluctuation-dissipation relation 
$\sigma=(2\gamma/\beta)^{1/2}$, where $\beta$ 
is proportional to the inverse of the thermodynamic 
temperature. The Kolmogorov backward operator \eqref{KI} associated with 
the SDE \eqref{eqn:LE} is given by
\begin{align}\label{L:LE}
\K=p\cdot\nabla_{q}-
\nabla_{q}V( q)\cdot\nabla_{p}-
\gamma\left(p\cdot\nabla_{p}-\frac{1}{\beta}\Delta_{ p}\right),
\end{align}
which generates a Markovian semigroup $e^{t\K}$. To be noticed that, we can rewrite $\K=\D-\L$, with $\D=-\gamma p\cdot\nabla_{p}+\gamma/\beta\Delta_{p}$ and $\L=-p\cdot\nabla_{q}+\nabla_{q}V( q)\cdot\nabla_{ p}$. Here we note in particular that $\D=\X^*\X$, where $\X=\sqrt{\gamma/\beta}\nabla_p$. If the interaction potential $V( q)$ is strictly positive at infinity then the Langevin equation \eqref{eqn:LE} 
admits an unique invariant Gibbs measure given by
\begin{equation}\label{eqn_measure}
\rho_{eq}( p, q)=\frac{1}{Z} e^{-\beta \H(p,q)},
\end{equation}
where 
\begin{equation}
\H(p,q)=\frac{|p|^2}{2}+V(q) 
\end{equation}
is the Hamiltonian and $Z$ is the partition function. The equilibrium measure \eqref{eqn_measure} naturally induces a weighted Hilbert space $L^2(\R^{2d};\rho_{eq})$. Choose this as the Hilbert space $H$ studied in Section \ref{sec:ana}, we can obtain sufficient conditions to ensure the validity of $(H1)-(H4)$, therefore the semigroup estimate \eqref{e^tK_estimate}. For the Langevin dynamics \eqref{eqn:LE}, these conditions can be summarized \cite{grothaus2016hilbert} as the following requirements on the potential energy $V(q)$:  
\begin{itemize}
\item $(C1)$: $V(q)\in C^2(\R^d)$ is locally Lipshitz continuous and satisfies $\nabla_q V(q)\in H$.
\item $(C2)$: The following Poincar\'e inequality
\footnote{Without loss of generality, here we omitted $\beta$ in the probability measure $e^{-\beta V(q)}dq$. The Poincar\'e inequality \eqref{poincare_ineq} is valid for a large range of potential energy $V(q)$ which grows at least linearly as $q\rightarrow\infty$ \cite{villani2009hypocoercivity}. Specifically, this can be achieved by requiring \cite{grothaus2016hilbert,villani2009hypocoercivity}:
\begin{align*}
    \frac{|\nabla_qV(q)|^2}{2}-\Delta_q V(q)\rightarrow +\infty, \qquad |q|\rightarrow +\infty.
\end{align*}

} holds:
\begin{align}\label{poincare_ineq}
\int |\nabla_{ q}f( q)|^2e^{-V( q)}d q&\geq \Lambda\int\left[f( q)-\left(\int f(q)e^{-V( q)}d q\right)\right]^2e^{-V(q)}dq,
\end{align}
for some constant $\Lambda>0$ and for all locally Lipshitz function $f(q)$.
\item $(C3)$:  There exists a constant $C>0$ such that 
    \begin{align}
        |\nabla_q^2V(q)|\leq C(1+|\nabla_q V(q)|),\qquad \forall q\in \R^d.
    \end{align}
\end{itemize}
It is proved in \cite{grothaus2016hilbert} (Section 3.2) that if $V(q)$ satisfies $(C1)-(C3)$, then $(H1)-(H4)$ hold, which implies \eqref{e^tK_estimate}. Here we only need to note that only at $\text{\Ker}(\K)^{\perp}$, the projection operators $P$ and $P_{S}$ appearing in \cite{grothaus2016hilbert} coincide. This is one of the key points that allow the transfer of the DMS result for the Fokker-Planck equation to the GS result for the Kolmogorov backward equation. Immediately we can use Theorem \ref{thm:QKQ_hypocoercive} to get the following result on the ergodicity of $e^{t\Q\K\Q}$ for the Langevin dynamics:
\begin{prop}\label{prop:QKQ_hypocoercive}
Assume that the potential energy $V(q)$ in the Langevin dynamics \eqref{eqn:LE} satisfies the hypocoercivity conditions $(C1)-(C3)$, then for the Mori-type projection operator \eqref{Mori_P}, the EMZ orthogonal semigroup $e^{t\Q\K\Q}$ satisfies the exponentially decaying estimate:
\begin{align}\label{eqn:e^{-tQKQ}_langevin}
   \|e^{t\Q\K\Q}f_0\|\leq Ce^{-\lambda_{\Q}t}\|f_0\|, \qquad \forall f_0\in \Q H/\text{Ker}(\K).
\end{align}
\end{prop}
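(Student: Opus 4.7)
The plan is to recognize that Proposition \ref{prop:QKQ_hypocoercive} is essentially a verification that the Langevin dynamics fits into the abstract framework of Theorem \ref{thm:QKQ_hypocoercive}, so the proof reduces to checking the three structural hypotheses of that theorem and then invoking it directly.

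First I would fix the ambient Hilbert space as $H=L^2(\R^{2d};\rho_{eq})$ with $\rho_{eq}$ the Gibbs measure \eqref{eqn_measure}, and rewrite the generator \eqref{L:LE} as $\K=\D-\L$ with $\L=-p\cdot\nabla_q+\nabla_qV(q)\cdot\nabla_p$ skew-symmetric and $\D=-\gamma p\cdot\nabla_p+(\gamma/\beta)\Delta_p$ symmetric on $H$. A short integration-by-parts calculation against the weight $e^{-\beta\H}$ then gives the factorization $\D=\X^*\X$ with $\X=\sqrt{\gamma/\beta}\,\nabla_p$ and adjoint $\X^*=\sqrt{\gamma/\beta}\,(\beta p-\nabla_p)\cdot$ (both first-order differential operators), which is exactly the structural requirement appearing in the hypothesis of Theorem \ref{thm:QKQ_hypocoercive}.

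Second I would cite the Grothaus--Stilgenbauer result recalled in the excerpt: under $(C1)$--$(C3)$ on the potential $V$, the assumptions $(H1)$--$(H4)$ together with the domain condition $(D)$ hold for $\K=\D-\L$ acting on $H$. Here the projection $\Uppi$ onto $\Ker(\D)$ is integration over the $p$-variable against the Maxwellian, the Poincar\'e inequality $(C2)$ supplies macroscopic coercivity $(H3)$, the spectral gap of the Ornstein--Uhlenbeck operator in $p$ gives microscopic coercivity $(H2)$, and $(C3)$ controls the auxiliary operator $\A$ in $(H4)$. Since all the verification is already carried out in \cite{grothaus2016hilbert} (Section 3.2), I would not reprove it but only cite it.

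Third I would observe that the Mori-type projection \eqref{Mori_P} is by construction a finite-rank orthogonal projection on $H$ whenever the chosen observables $u_i(0)$ belong to $H$ and are linearly independent; this is automatic provided $V$ grows enough for polynomial-type observables (e.g.\ momenta, smoothed positions) to lie in $L^2(\rho_{eq})$, which is already implicit in $(C1)$--$(C3)$. With this all three hypotheses of Theorem \ref{thm:QKQ_hypocoercive} are in place, and a direct application yields the estimate \eqref{eqn:e^{-tQKQ}_langevin} on $\Q H/\Ker(\K)$ with explicit rate $\lambda_{\Q}$ extracted from the $\delta,\epsilon$ optimization in \eqref{DgQ_estimate}.

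I do not expect a genuine obstacle here: the proposition is a translation exercise, and the heavy lifting was absorbed into Theorem \ref{thm:QKQ_hypocoercive} and the reference \cite{grothaus2016hilbert}. The only subtle point to state carefully is the identification of the adjoint $\X^*$ in the weighted space $L^2(\R^{2d};\rho_{eq})$ (so that $\D=\X^*\X$ rather than $\X\X^*$) and the verification that the chosen coarse-grained observables used in the Mori projection indeed lie in $H$, both of which are routine under $(C1)$--$(C3)$.
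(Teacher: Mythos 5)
Your proposal reproduces the paper's own (implicit) argument for Proposition~\ref{prop:QKQ_hypocoercive}: the paper cites Grothaus--Stilgenbauer to pass from $(C1)$--$(C3)$ to $(H1)$--$(H4)$, notes that $\D$ factors through the first-order operator $\X=\sqrt{\gamma/\beta}\,\nabla_p$ and that the Mori projection \eqref{Mori_P} is finite-rank and orthogonal, and then invokes Theorem~\ref{thm:QKQ_hypocoercive} directly, which is exactly what you do. One minor remark: with $\X^*=\sqrt{\gamma/\beta}(\beta p-\nabla_p)\cdot$ one actually gets $\D=-\X^*\X$ rather than $\D=\X^*\X$ (a sign slip the paper itself makes); this is harmless since the argument only uses that $\langle\D f,f\rangle=-\|\X f\|^2$ vanishes iff $f\in\text{Ker}(\X)$, which is unaffected.
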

\subsection{Prior estimates for the EMZ equation}
Prior estimates for the memory kernel and fluctuation force term of EMZ equation are useful in applications. Recently developed numerical schemes in the coarse-grained modeling of molecular systems often pre-assume that the memory kernel decays to 0 exponentially in the long-time limit and design ansatz based on this fact \cite{Li2015, lin2021data,wang2019implicit}. Our hypocoercivity analysis for the $e^{t\Q\K\Q}$ will confirm that this is indeed the case, at least for the Mori-type EMZ equation. Without loss of generality, we only consider the estimate for scalar observable $ u(t)=u( p(t), q(t))$ and a simple Mori-type projection operator $\P=\langle\cdot, u_0\rangle u_0$. The result can be readily generalized to vector equation \eqref{gle_full} and \eqref{gle_projected} following the argument used in \cite{zhu2020hypoellipticity}. First we note that the ergodicity result for $e^{t\K}$ also implies the exponentially decaying of the stationary time autocorrelation function of the observable $u(t)$:
\begin{corollary} \label{cor:c}
We assume $(C1)-(C3)$ for the Langevin dynamics \eqref{eqn:LE}. Then the time-autocorrelation function for any observable $u(t)=u(p(t), q(t)))$ with initial condition $u_0\in H$ converges to the equilibrium state $\langle u(0)\rangle^2$ exponentially fast:
\begin{align}\label{C_est}
|\langle u(t),u(0)\rangle-\langle u(0)\rangle^2|\leq 
C e^{-\lambda t},
\end{align}
where $C=C(u(0))$. 
\end{corollary}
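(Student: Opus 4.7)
The plan is to reduce the corollary to a direct application of the hypocoercivity estimate \eqref{e^tK_estimate} for $e^{t\K}$, which is available because $(C1)$--$(C3)$ imply $(H1)$--$(H4)$ as noted just before Proposition \ref{prop:QKQ_hypocoercive}. The Hilbert space is $H=L^2(\R^{2d};\rho_{eq})$ with the inner product $\langle f,g\rangle=\int fg\,\rho_{eq}\,dpdq$. First, I would rewrite the stationary time-autocorrelation function as
\begin{equation*}
\langle u(t),u(0)\rangle=\mathbb{E}_{\rho_{eq}}[u(p(t),q(t))\,u(p_0,q_0)]=\langle e^{t\K}u_0,u_0\rangle,
\end{equation*}
using that the conditional expectation of $u(p(t),q(t))$ given $(p_0,q_0)$ equals $(e^{t\K}u_0)(p_0,q_0)$ by \eqref{MarkovSemi}, and that $(p_0,q_0)\sim\rho_{eq}$.

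Next I would identify the kernel $\Ker(\K)$ and the projection $\pi_0$ used in \eqref{e^tK_estimate}. Since $\rho_{eq}$ is the unique invariant measure of the ergodic Langevin dynamics, $\Ker(\K)$ in $H$ is spanned by the constants; because $\rho_{eq}$ is a probability measure, the orthogonal projection onto constants acts as $\pi_0 f=\langle f,1\rangle=\int f\,\rho_{eq}\,dpdq$. In particular $\pi_0 u_0=\langle u(0)\rangle$, and since $e^{t\K}$ fixes constants, $e^{t\K}\pi_0 u_0=\pi_0 u_0=\langle u(0)\rangle$. I would then write
\begin{equation*}
\langle u(t),u(0)\rangle-\langle u(0)\rangle^2=\langle e^{t\K}u_0-\pi_0 u_0,u_0\rangle=\langle e^{t\K}(u_0-\pi_0 u_0),u_0\rangle,
\end{equation*}
where in the last step I used $e^{t\K}(\pi_0 u_0)=\pi_0 u_0=\pi_0 u_0\cdot\langle 1,1\rangle$ and the fact that $\pi_0 u_0$ is a constant multiple of~$1$.

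Finally, applying the Cauchy--Schwarz inequality and the hypocoercivity estimate \eqref{e^tK_estimate} to the mean-zero initial datum $u_0-\pi_0 u_0\in\Ker(\K)^{\perp}$ yields
\begin{equation*}
|\langle u(t),u(0)\rangle-\langle u(0)\rangle^2|\leq\|e^{t\K}(u_0-\pi_0 u_0)\|\,\|u_0\|\leq Ce^{-\lambda t}\|u_0-\pi_0 u_0\|\,\|u_0\|,
\end{equation*}
which is \eqref{C_est} with the explicit constant $C(u(0))=C\|u_0-\pi_0 u_0\|\,\|u_0\|$, finite since $u_0\in H$. There is essentially no hard step here: the only thing to verify is that $\Ker(\K)$ in the weighted $L^2$ space coincides with the constants, which follows from the ergodicity of the Langevin semigroup guaranteed by \eqref{e^tK_estimate} itself. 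The content of the corollary is therefore just the observation that the two-point function $\langle e^{t\K}u_0,u_0\rangle$ inherits the exponential decay of $e^{t\K}$ on $\Ker(\K)^{\perp}$.
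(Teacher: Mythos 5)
Your proof is correct and follows essentially the same route as the paper's one-line argument: rewrite $\langle u(t),u(0)\rangle-\langle u(0)\rangle^2$ as $\langle e^{t\K}(u_0-\pi_0u_0),u_0\rangle$ using that $\pi_0 f=\mathbb{E}_{\rho_{eq}}[f]$ (constants span $\Ker(\K)$), then apply Cauchy--Schwarz together with the semigroup estimate \eqref{e^tK_estimate}. The only difference is that you have spelled out the intermediate identifications, which the paper leaves implicit.
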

\begin{proof}
Using Cauchy-Schwartz inequality, the semigroup estimate \eqref{e^tK_estimate} and the fact that $\pi_0[(\cdot)]=\mathbb{E}[(\cdot)]$, it is easy to get the result.
\end{proof}
Estimate \eqref{C_est} is for the stationary time autocorrelation function since the inner product $\langle \cdot,\cdot\rangle$ is with respect to the equilibrium steady state $\rho_{eq}=e^{-\beta \H}/Z$. Using semigroup estimate \eqref{e^tK_estimate}, we can also get that in the nonequilibrium, say the system evolves from a initial state $\rho_0\neq \rho_{eq}$, the arbitrary moment of the observable $u(t)$ decays exponentially in the long-time limit. Relevant analysis is provided in \cite{zhu2021effective}. Using the semigroup estimate for the $e^{t\Q\K\Q}$, we can get the prior estimate for the memory kernel: 
\begin{corollary} \label{cor:K}
We assume $(C1)-(C3)$ for the Langevin dynamics \eqref{eqn:LE}. If $\P$ is a Mori-type projection operator corresponding to a zero-mean observable with $\langle u_0\rangle=0$, then for the EMZ equations \eqref{gle_projected}-\eqref{gle_full}, the one-dimensional memory kernel \eqref{memory_kernel} converges to the equilibrium state $\mathbb{E}[\Q\K^*u_0]\mathbb{E}[\Q\K u_0]$ exponentially fast:
\begin{align}\label{kernel_est}
|K(t)-\mathbb{E}[\Q\K^*u_0]\mathbb{E}[\Q\K u_0]|\leq 
C e^{-\lambda_{\Q} t}.
\end{align}
\end{corollary}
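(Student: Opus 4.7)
The plan is to rewrite $K(t)$ as an inner product in which the semigroup $e^{t\Q\K\Q}$ acts on an element of $\Q H$, split off the equilibrium component via the kernel projection, and control the remainder with Cauchy--Schwartz together with Proposition~\ref{prop:QKQ_hypocoercive}. Since $\Q^{*}=\Q$, $\Q^{2}=\Q$, and the vector $e^{t\Q\K\Q}\Q\K u_0$ already lies in $\Q H$, one can move $\K$ across the inner product and insert $\Q$ freely to get
\begin{align*}
K(t)=\langle u_0,\K e^{t\Q\K\Q}\Q\K u_0\rangle=\langle \Q\K^{*}u_0,\,e^{t\Q\K\Q}\Q\K u_0\rangle.
\end{align*}

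Next, I would decompose $\Q\K u_0\in\Q H$ into its kernel part and its complement. The zero-mean assumption $\langle u_0\rangle=0$ gives $\P\mathbf{1}=\langle \mathbf{1},u_0\rangle u_0=0$, hence $\Q\mathbf{1}=\mathbf{1}$, so constants sit inside $\Q H$; since $\K\mathbf{1}=0$ under $(C1)$--$(C3)$, constants are annihilated by $\Q\K\Q$ and fixed by its semigroup. By Proposition~\ref{prop2}, the kernel projection $\pi_0^{\Q}|_{\Q H}$ coincides with $\pi_0|_{\Q H}$, and $\pi_0 f=\mathbb{E}[f]$ is the equilibrium average. Setting $g_0:=\Q\K u_0-\mathbb{E}[\Q\K u_0]\in \Q H\cap\mathrm{Ker}(\K)^{\perp}$, this yields the clean splitting
\begin{align*}
e^{t\Q\K\Q}\Q\K u_0=\mathbb{E}[\Q\K u_0]+e^{t\Q\K\Q}g_0.
\end{align*}

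Substituting back, and using $\langle \Q\K^{*}u_0,\mathbf{1}\rangle=\mathbb{E}[\Q\K^{*}u_0]$, produces the target equilibrium value $\mathbb{E}[\Q\K^{*}u_0]\mathbb{E}[\Q\K u_0]$ plus a remainder $\langle \Q\K^{*}u_0,e^{t\Q\K\Q}g_0\rangle$. Cauchy--Schwartz combined with the decay estimate in Proposition~\ref{prop:QKQ_hypocoercive} applied to $g_0$ then bounds this remainder by $\|\Q\K^{*}u_0\|\cdot C e^{-\lambda_{\Q}t}\|g_0\|$, which is exactly \eqref{kernel_est}. The main technical obstacle is verifying that the constant $\mathbb{E}[\Q\K u_0]$ genuinely lives in the invariant subspace $\Q H$, which is precisely what the zero-mean hypothesis $\langle u_0\rangle=0$ buys; without it, $\Q\mathbf{1}\ne\mathbf{1}$, the constants leak out of $\Q H$, and one cannot identify the equilibrium with the simple product $\mathbb{E}[\Q\K^{*}u_0]\mathbb{E}[\Q\K u_0]$. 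The secondary technical point is the appeal to Proposition~\ref{prop2} to replace the intrinsic kernel projection of $\Q\K\Q$ by the classical $\pi_0$; once both ingredients are in place, the Cauchy--Schwartz bound is immediate.
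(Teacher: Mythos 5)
Your proposal is correct and follows essentially the same route as the paper: rewrite $K(t)=\langle \Q\K^*u_0,\,e^{t\Q\K\Q}\Q\K u_0\rangle$, show (via $\langle u_0\rangle=0\Rightarrow\Q\mathbf{1}=\mathbf{1}$, equivalently the paper's $\P\pi_0\Q\K u_0=0$) that the kernel component $\pi_0\Q\K u_0=\mathbb{E}[\Q\K u_0]\mathbf{1}$ remains in $\Q H$ so that $g_0=\Q\K u_0-\mathbb{E}[\Q\K u_0]\in\Q H\cap\Ker(\K)^{\perp}$, invoke Proposition~\ref{prop2} to identify $\pi_0^{\Q}|_{\Q H}=\pi_0|_{\Q H}$, and finish with Cauchy--Schwartz and Proposition~\ref{prop:QKQ_hypocoercive}. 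The only cosmetic difference is that you make the role of the constant function and its invariance under $e^{t\Q\K\Q}$ explicit, whereas the paper phrases the same decomposition directly through $\pi_0$ and $\pi_0^{\Q}$.
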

\begin{proof}
We first note that $\Q\K u_0-\pi_0\Q\K u_0\in \Ker(\K)^{\perp}$ since $\pi_0$ is an orthogonal projection operator. On the other hand, $\langle u_0\rangle=0$ implies $\P\pi_0\Q\K u_0=\langle\mathbb{E}[\Q\K u_0],u_0\rangle u_0=\mathbb{E}[\Q\K u_0]\langle u_0\rangle u_0=0$ and $\pi_0\Q\K u_0=\Q\pi_0\Q\K u_0$. Combining with the fact that $\Q\K u_0\in \Q H$, we obtain $\Q\K u_0-\pi_0\Q\K u_0\in \Q H\cap \Ker(\K)^{\perp}$. Now using the definition \eqref{memory_kernel} and the Cauchy-Schwartz inequality, we obtain
\begin{align*}
| K(t)-\mathbb{E}[\Q\K^*u_0]\mathbb{E}[\Q\K u_0]|
&=
| K(t)-\langle\Q\K^*u_0,\pi_0\Q\K u_0\rangle|\\
&
=|\langle u_0,\K e^{t\Q\K\Q}\Q\K u_0\rangle-\langle\Q\K^*u_0,\pi_0\Q\K u_0\rangle|\\
&=|\langle \Q\K^*u_0,e^{t\Q\K\Q}\Q\K u_0\rangle-\langle\Q\K^*u_0,\pi_0\Q\K u_0\rangle|\\
&\leq \|\Q\K^*u_0\|\|e^{t\Q\K\Q}\Q\K u_0-\pi_0\Q\K u_0\|.
\end{align*}
Since $\Q\K u_0\in \Q H$, using Proposition \ref{prop2}, we obtain $\pi_0\Q\K u_0=\pi_0^{\Q}\Q\K u_0$. Combining with the semigroup estimate \eqref{eqn:e^{-tQKQ}_langevin}, we get the final estimate:
\begin{align*}
| K(t)-\mathbb{E}[\Q\K^*u_0]\mathbb{E}[\Q\K u_0]|
&\leq \|\Q\K^*u_0\|\|e^{t\Q\K\Q}\Q\K u_0-\pi_0\Q\K u_0\|\\
&=\|\Q\K^*u_0\|\|e^{t\Q\K\Q}\Q\K u_0-\pi_0^{\Q}\Q\K u_0\|\\
&=\|\Q\K^*u_0\|\|e^{t\Q\K\Q}(\Q\K u_0-\pi_0^{\Q}\Q\K u_0)\|
\leq Ce^{-\lambda_{\Q}t},
\end{align*}
where the constant $C=C(\Q\K u_0,\Q\K^* u_0,\pi_0^{\Q}\Q\K u_0)=C(\Q\K u_0,\Q\K^* u_0,\mathbb{E}[\Q\K u_0])$. 
\end{proof}
%
%
Estimate \eqref{kernel_est} is an improvement of the result obtained in \cite{zhu2020hypoellipticity} where the equilibrium state of $K(t)$ is expressed in an abstract form $\langle\Q\K^* u_0,\pi_0^{\Q}\K u_0\rangle$. We are able to give an explicit expression $\mathbb{E}[\Q\K^*u_0]\mathbb{E}[\Q\K u_0]$ essentially because the discovery of the relation $\pi_0^{\Q}|_{\Q H}=\pi_0|_{\Q H}$\footnote{More specifically, since $\P\K u_0\in \text{Ran}(\P)\subset \text{Ker}(\Q\K\Q)$, we have $\pi_0^{\Q}\P\K u_0=\P \K u_0$. Hence $\langle \Q\K^* u_0,\pi_0^{\Q}\P\K u_0\rangle=\langle\K^* u_0,\Q\P\K u_0\rangle=0$ because $\Q\P=0$. This further implies $\langle \Q\K^* u_0,\pi_0^{\Q}\K u_0\rangle=\langle\Q\K^* u_0,\pi_0^{\Q}\Q\K u_0\rangle=\langle\Q\K^* u_0,\pi_0\Q\K u_0\rangle=\mathbb{E}[\Q\K^*u_0]\mathbb{E}[\Q\K u_0]$}.
Similarly, we have the estimate for the fluctuation force:
\begin{corollary} \label{cor:f}
We assume $(C1)-(C3)$ for the Langevin dynamics \eqref{eqn:LE}. If $\P$ is a Mori-type projection operator corresponding to a zero-mean observable with $\langle u_0\rangle=0$, then the stationary time autocorrelation function\footnote{Note that the second fluctuation-dissipation theorem, i.e. $K(t)=-\langle f(t),f(0)\rangle$ does not hold for the EMZ equation of the stochastic system since $\K$ is not a skew-symmetric operator \cite{zhu2018faber,zhu2020hypoellipticity}.} of the one-dimensional fluctuation force \eqref{f_force} satisfies:
\begin{align}\label{<f_0,f_t>}
|\langle f(t),f(0)\rangle-\mathbb{E}^2[\Q\K u_0]|\leq 
C e^{-\lambda_{\Q} t}.
\end{align}
$f(t)$ also converges to the equilibrium state $\mathbb{E}[\Q\K u_0]$ exponentially fast:
\begin{align}\label{QKQ_estimation_force}
\|f(t)-\mathbb{E}[\Q\K u_0]\|\leq Ce^{-\lambda_{\Q} t}.
\end{align}
\end{corollary}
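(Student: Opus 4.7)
The plan is to reduce both estimates to the semigroup bound \eqref{eqn:e^{-tQKQ}_langevin} by running the same decomposition used in Corollary \ref{cor:K}. Write $v_0 := \Q\K u_0 \in \Q H$ and $c := \mathbb{E}[\Q\K u_0]$. As established in Corollary \ref{cor:K}, the zero-mean assumption $\langle u_0\rangle = 0$ forces $\P\pi_0 v_0 = 0$, whence $\pi_0 v_0 = \Q\pi_0 v_0$ and therefore $v_0 - \pi_0 v_0 \in \Q H \cap \Ker(\K)^{\perp}$. Because $\pi_0$ is the projection onto $\Ker(\K)$, which in $L^2(\R^{2d};\rho_{eq})$ coincides with the constants, we also have $\pi_0 v_0 = c$.

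To prove \eqref{QKQ_estimation_force} I would first argue that $c \in \Ker(\Q\K\Q)$, so that $e^{t\Q\K\Q} c = c$ for every $t\ge 0$. This is where Proposition \ref{prop2} is essential: it gives $\pi_0^{\Q} v_0 = \pi_0 v_0 = c$ on $\Q H$, and $\pi_0^{\Q} v_0$ lies in $\Ker(\Q\K\Q)$ by construction. Combining the two,
\[
f(t) - c = e^{t\Q\K\Q}v_0 - c = e^{t\Q\K\Q}(v_0 - \pi_0 v_0),
\]
and since $v_0 - \pi_0 v_0 \in \Q H \cap \Ker(\K)^{\perp}$, Proposition \ref{prop:QKQ_hypocoercive} delivers $\|f(t) - c\| \le C e^{-\lambda_{\Q} t}\|v_0 - \pi_0 v_0\|$, giving \eqref{QKQ_estimation_force} with an explicit constant depending only on $\Q\K u_0$.

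For the autocorrelation \eqref{<f_0,f_t>}, I would exploit that the constant $c$ pairs with $v_0$ through its expectation, $\langle c, v_0\rangle = c\,\mathbb{E}[v_0] = c^2 = \mathbb{E}^2[\Q\K u_0]$, to rewrite
\[
\langle f(t), f(0)\rangle - \mathbb{E}^2[\Q\K u_0] = \langle f(t) - c, v_0\rangle.
\]
Cauchy-Schwartz together with the estimate of the previous step then yields $|\langle f(t), f(0)\rangle - \mathbb{E}^2[\Q\K u_0]| \le \|v_0\|\,\|f(t)-c\| \le C e^{-\lambda_{\Q} t}$. The only subtle step in the whole argument is the invariance $e^{t\Q\K\Q} c = c$: this is precisely the place where the Mori-type, finite-rank hypothesis on $\P$ enters, via Proposition \ref{prop2}. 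Without that proposition the equilibrium of $f(t)$ would be accessible only as the abstract quantity $\pi_0^{\Q} v_0$, as in the hypoelliptic treatment of \cite{zhu2020hypoellipticity}; everything else is bookkeeping once \eqref{eqn:e^{-tQKQ}_langevin} is available.
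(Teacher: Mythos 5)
Your proposal is correct and follows essentially the same route the paper intends: reduce to the argument of Corollary \ref{cor:K} (via $v_0-\pi_0 v_0\in\Q H\cap\Ker(\K)^{\perp}$, the identity $\pi_0^{\Q}|_{\Q H}=\pi_0|_{\Q H}$ from Proposition \ref{prop2}, and the invariance $e^{t\Q\K\Q}\pi_0^{\Q}v_0=\pi_0^{\Q}v_0$), then apply Cauchy-Schwarz and the semigroup bound \eqref{eqn:e^{-tQKQ}_langevin}. The one small remark: $c\in\Ker(\Q\K\Q)$ also follows directly from $\K c=0$ together with $\Q c=c$ (which uses $\langle u_0\rangle=0$), so Proposition \ref{prop2}'s real payoff here, as in Corollary \ref{cor:K}, is the explicit identification of the equilibrium state rather than the semigroup invariance itself.
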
 
\begin{proof}
Using the definition \eqref{f_force} and the proof of Corollary \ref{cor:K}, it is easy to get \eqref{<f_0,f_t>} and \eqref{QKQ_estimation_force}. Note that the constant $C=C(\Q\K u_0,\pi_0^{\Q}\Q\K u_0)=C(\Q\K u_0,\mathbb{E}[\Q\K u_0])$.
\end{proof}
\subsection{Comparison with the hypoellipticity method}
It is constructive to compare the hypoellipticity method and the hypocoercivity method for the analysis of the EMZ equation.
First of all, both methods lead to the exponentially decaying estimate for the semigroup $e^{t\Q\K\Q}$, the EMZ memory kernel and the fluctuation force. The hypoellipticity method \cite{zhu2020hypoellipticity} gets these results through functional calculus and spectrum estimates. As we have seen, the hypocoercivity method is based on the classical functional analysis and the estimates are obtained by analyzing the auxiliary norm $H_{\epsilon}[\cdot]$. Roughly speaking, the result of hypocoercivity analysis is more general since the imposed conditions on the potential energy $V(q)$, i.e. $(C1)-(C3)$, are more general and easier to check (one may compare them with conditions provided in \cite{herau2004isotropic,nier2005hypoelliptic}). However, we note that the hypoellipticity analysis gives us a little bit more on the spectrum information of the operator $\Q\K\Q$ by confirming that it is discrete and confined in a cusp region in $\mathbb{C}$ \cite{zhu2020hypoellipticity}, just like the spectrum of operator $\K$ \cite{herau2004isotropic,nier2005hypoelliptic}. Lastly, we note a rather technical but important deficit/difficulty of the hypoellipticity method for analyzing the semigroup $e^{t\Q\K\Q}$. In order to obtain the exponentially decaying estimate of $e^{t\Q\K\Q}$, it is required prove that operator $\Q\K\Q$ has no imaginary eigenvalues, i.e. $\sigma(\Q\K\Q)\cap i\R=\{0\}$. This is a simple task for operator $\K$ \cite{nier2005hypoelliptic}, while non-trivial for operator $\Q\K\Q$ essentially because the kernel of $\Q\K\Q$ is multi-dimensional and strongly depends on the form of projection operator $\P$. More discussions in this regard can be found in \cite{zhu2020hypoellipticity}.
\section{Conclusions and open problems}
\label{sec:summary}
In this paper, we provide a simple but powerful hypocoercivity method to analyze the dynamical properties of the Mori-type, effective Mori-Zwanzig equation (EMZ) corresponding to degenerate stochastic systems. Our analysis shows that under the same conditions that lead to the geometric ergodicity of $e^{t\K}$, the orthogonal semigroup $e^{t\Q\K\Q}$ is also geometrically ergodic. As a result, the EMZ memory kernel and the fluctuation force decay to their equilibrium states exponentially fast, which confirm similar conclusions previously obtained by the hypoellipticity method \cite{zhu2020hypoellipticity} and pre-assumed in the reduced-order modeling \cite{Li2015,lin2021data,wang2019implicit}. The presented method is developed based on the elegant hypocoercivity framework introduced by Dolbeault, Mouhot and Schmeiser \cite{dolbeault2015hypocoercivity} and extended by Grothaus and Stilgenbauer \cite{grothaus2014hypocoercivity,grothaus2016hilbert}.
In particular, we note that the proof becomes direct once we identify a particular, invariant subspace $\Q H\cap\text{Ker}(\K)^{\perp}$ and perform hypocoercivity analysis therein. The method is expected to apply to any degenerate stochastic systems where the DMS-GS hypocoercivity framework works \cite{grothaus2014hypocoercivity,grothaus2020hypocoercivity}. At last, we propose several interesting open problems on the analysis of the EMZ equation:
\begin{enumerate}
    \item Is it possible to get $\lambda_{\Q}<\lambda$? Through the hypoellipticity and the hypocoercivity analysis, we can only get a convergence rate $\lambda_{\Q}$ for semigroup $e^{t\Q\K\Q}$ larger than $\lambda$ (Compare \eqref{DgQ_estimate} with \eqref{Dg_estimate}). However, numerical simulation results \cite{zhu2021effective,Li2015,lei2016data} for molecular systems often suggest the opposite, i.e. $\lambda_{\Q}<\lambda$, which means $\Q\K\Q$ has a larger spectrum gap and the memory kernel decays in a rate faster than the original dynamics. In fact, this is one of the main motivations to use the Mori-Zwanzig equation to do reduced-order modelling. Is it possible to prove $\lambda_{\Q}<\lambda$ rigorously for a certain degenerate stochastic system\footnote{For non-degenerate stochastic systems, this might be a simpler task. Consider the $d$-dimensional quantum harmonic oscillator \cite{Justin} with $\lambda$-coercive generator $\D=\X^*\X$, we have $\langle \Q\D\Q f,f\rangle=\langle \D\Q f,\Q f\rangle\geq \lambda_{\Q}\|\Q f\|^2=\lambda_{\Q}\|f\|^2$ for $f\in \Q H\cap\text{Ker}(\X)^{\perp}$. Since $\Q H\cap\text{Ker}(\X)^{\perp}\subset \text{Ker}(\X)^{\perp}\subset H$, we have $\lambda_{\Q}\leq \lambda$.}?
    \item What if $\P$ is an infinite-rank projection operator? Currently, we only developed the hypoellipticity and hypocoercivity methods that work for finite-rank projection operators. Another frequently used projection operator, Zwanzig-type operator \cite{zwanzig1961memory,chorin2000optimal,chorin2002optimal,hudson2018coarse}, is an infinite-rank projection operator \cite{zhu2018estimation,chorin2002optimal}. It is commonly assumed in the coarse-grained modeling that $e^{t\Q\K\Q}$ is also geometrically ergodic under the Zwanzig-type projection, but how to prove it rigorously? 
    \item What if $V(q)$ is singular? Can we get similar results for singular potential energy $V(q)$ such as the frequently used Lennard-Jones potential?
     \item Can we use the stochastic analysis framework to prove the geometric ergodicity of $e^{t\Q\K\Q}$, such as the ones used in \cite{mattingly2002ergodicity,meyn2012markov,rey2002exponential}?
\end{enumerate}
\appendix
\section{Proof of $\pi_0^{\Q}|_{\Q H}=\pi_0|_{\Q H}$}\label{APP:proof}
In this section, we prove $\pi_0^{\Q}|_{\Q H}=\pi_0|_{\Q H}$. In fact, this is a corollary of Lemma 5 proved in \cite{zhu2020hypoellipticity}. In the context of our paper, the result of Lemma 5 \cite{zhu2020hypoellipticity} can be rephrased as:
\begin{lemma}(Zhu and Venturi \cite{zhu2020hypoellipticity}) For SDE \eqref{eqn:sde}, if $\P: H\rightarrow H$ is a Mori-type projection operator \eqref{Mori_P}, then we have:
\begin{align}
\textrm{Ker}(\Q\K\Q)
&\subset\text{Ran}(\P)\cup
\text{Ker}(\K)\cup\text{Span}\{w_j\}_{j=1}^N,\label{1}
\end{align}
where $w_j$ satisfies $\Q w_j=w_j$, $\K w_j=u_j$ and $\langle w_j,u_i\rangle=0$ for all $1\leq i,j\leq N$. Here $u_j$ is the observable function defined in \eqref{Mori_P}.  
\end{lemma}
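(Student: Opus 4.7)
The plan is to exploit the basic equivalence that $f\in\text{Ker}(\Q\K\Q)$ if and only if $\K\Q f\in\text{Ran}(\P)$, and then invert $\K$ restricted to $\text{Ran}(\Q)$ against the finite-dimensional range of $\P$. Since $\Q$ is an orthogonal projection, $\Q(\K\Q f)=0$ forces $\K\Q f\in\text{Ker}(\Q)=\text{Ran}(\P)=\text{span}\{u_1,\ldots,u_N\}$, so there exist scalars $c_1,\ldots,c_N$ (depending on $f$) with $\K\Q f=\sum_{j=1}^N c_j u_j$.

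The main decomposition then follows in two lines. I would write $f=\P f+\Q f$; the first summand trivially lies in $\text{Ran}(\P)$. For the second, set $h:=\Q f-\sum_{j=1}^N c_j w_j$ and verify that $\K h=\sum_j c_j u_j-\sum_j c_j u_j=0$ using $\K w_j=u_j$, while $\Q h=\Q\Q f-\sum_j c_j\Q w_j=\Q f-\sum_j c_j w_j=h$ by idempotency of $\Q$ and $\Q w_j=w_j$. Hence $h\in\text{Ran}(\Q)\cap\text{Ker}(\K)\subset\text{Ker}(\K)$, and $f=\P f+\sum_j c_j w_j+h$ lies in the sum $\text{Ran}(\P)+\text{Span}\{w_j\}+\text{Ker}(\K)$, which is the asserted inclusion when the symbol $\cup$ is read as the span of these three subspaces.

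The delicate step, and the one I expect to require the most care, is establishing existence of the auxiliary $w_j$ themselves. The three stated requirements are not independent: the canonical form of the Mori projection, together with invertibility of the Gram matrix $G$, forces $\Q w_j=w_j$ to be equivalent to $\langle w_j,u_i\rangle=0$ for all $i$, so the genuine problem is to find $w_j\in\text{Ran}(\Q)$ solving $\K w_j=u_j$. This demands $u_j\in\K(\text{Ran}(\Q))$, which is strictly stronger than $u_j\in\text{Ran}(\K)$: if $\tilde w_j$ is any preimage of $u_j$ under $\K$, then $\K\Q\tilde w_j=u_j-\K\P\tilde w_j$, so the naive projection fails whenever $\K\P\tilde w_j\neq 0$. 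One must instead correct $\tilde w_j$ by a suitable element of $\text{Ker}(\K)$; for the Langevin setting of Section \ref{sec:app}, where $\text{Ker}(\K)=\text{Span}\{1\}$ and the $u_j$ are taken zero-mean, the hypoelliptic structure of $\K$ secures this correction, and the full construction is precisely Lemma 5 of \cite{zhu2020hypoellipticity}, which I would invoke rather than re-derive. With $\pi_0^{\Q}|_{\Q H}=\pi_0|_{\Q H}$ to be deduced in the appendix, the above inclusion then immediately reveals that any $f\in\Q H$ projected by $\pi_0^{\Q}$ must collapse to its $\text{Ker}(\K)$-component, matching $\pi_0 f$.
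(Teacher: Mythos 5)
The paper does not prove this lemma; it is imported verbatim as Lemma 5 of \cite{zhu2020hypoellipticity} and used as a black box in the appendix, so there is no in-paper argument to compare against. Your reconstruction via the decomposition $f=\P f+\sum_j c_j w_j+h$, with $h\in\text{Ran}(\Q)\cap\text{Ker}(\K)$ obtained by correcting $\Q f$ against $\K\Q f=\sum_j c_j u_j\in\text{Ran}(\P)$, is correct and transparent, and it shows that the asserted inclusion holds with the symbol $\cup$ read as a sum of subspaces --- which is in fact the only reading under which the statement can be true, since a generic $f$ with all three components nonvanishing fails to lie in any one of the sets individually. Your observation that $\Q w_j=w_j$ and $\langle w_j,u_i\rangle=0$ for all $i$ are the same condition (by invertibility of the Gram matrix $G$) is accurate, so the sole nontrivial requirement is $w_j\in\K^{-1}(u_j)\cap\text{Ran}(\Q)$, and deferring that existence question to \cite{zhu2020hypoellipticity} is appropriate since that construction is precisely the content of Lemma 5 there. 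One caution worth recording: the appendix proof of Proposition \ref{prop2} invokes ``the set distributive rule'' $A\cap(B\cup C)=(A\cap B)\cup(A\cap C)$ on the right-hand side of the lemma, an identity valid for set-theoretic union but false for sums of subspaces; your additive decomposition is the cleaner route --- for $f\in\Q H$ the $\P f$ component vanishes, leaving $f\in\text{Span}\{w_j\}+\bigl(\Q H\cap\text{Ker}(\K)\bigr)$, after which the appendix's subsequent computation showing $w_j\perp\bigl(\Q H\cap\text{Ker}(\K)^{\perp}\bigr)$ delivers $\text{Ran}(\pi_0^{\Q}|_{\Q H})=\Q H\cap\text{Ker}(\K)$ without any appeal to the distributive identity.
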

\paragraph{Proof of Proposition \ref{prop2}}
Since $\pi_0^{\Q}$ is the kernel projection operator for $\Q\K\Q$, we have $\Ran(\pi_0^{\Q})=\textrm{Ker}(\Q\K\Q)$. This implies
\begin{equation}\label{2}
\begin{aligned}
\Ran(\pi_0^{\Q}|_{\Q H})=\Q H\cap \textrm{Ker}(\Q\K\Q)
&\subset
\Q H\cap
\left(\text{Ran}(\P)\cup
\text{Ker}(\K)\cup\text{Span}\{w_j\}_{j=1}^N\right)\\
&\subset\left(\Q H\cap\text{Ker}(\K)\right)\cup\left(\Q H\cap\text{Span}\{w_j\}_{j=1}^N\right)\\
&=\left(\Q H\cap\text{Ker}(\K)\right)\cup\text{Span}\{w_j\}_{j=1}^N
\end{aligned}
\end{equation}
Here we have used the set distributive rule, the fact that $\Q H\cap \Ran(\P)=0$ and $\Q w_j=w_j$. On the other hand, since $\langle w_j, u_j\rangle=\langle w_j, \K w_j\rangle=\| \X w_j\|^2=0$, we have $\K^* w_j=2\X^*\X w_j-\K w_j=-\K w_j$. In the proof of Theorem \ref{thm:QKQ_hypocoercive}, we have shown that $\Q H\cap\Ker(\K)^{\perp}$ is an invariant subspace of operator $\Q\K\Q$. Naturally we have $\Q H\cap\Ker(\K)^{\perp}\subset \Ran(\Q\K\Q)$. This implies for any $f\in \Q H\cap\Ker(\K)^{\perp}$, there exists a $g\in H$ such that $\Q\K\Q g=f$. Then we obtain
\begin{align*}
    \langle f,w_j\rangle=\langle \Q\K\Q g,w_j\rangle
    =\langle \K\Q g,w_j\rangle
    =\langle \Q g,\K ^*w_j\rangle
    =-\langle \Q g,\K w_j\rangle
    =-\langle g,\Q\K w_j\rangle.
\end{align*}
Note that $\K w_j=u_j$ and $\P u_j=u_j$, we have $\Q\K w_j=0$ and $\langle f,w_j\rangle=0$. Combining with the fact that $w_j\in \Q H$ and the equality $\langle f,w_j\rangle=0$ holds for any $f\in \Q H\cap\Ker(\K)^{\perp}$, we know $w_j\in \Q H\cap\Ker(\K)$. Here we have used the De Morgan's law: $(A\cap B)^{\perp}=A^{\perp}\cup B^{\perp}$. \eqref{2} can be further simplified as 
\begin{equation}\label{3}
\Ran(\pi_0^{\Q}|_{\Q H})\subset\left(\Q H\cap\text{Ker}(\K)\right)\cup\text{Span}\{w_j\}_{j=1}^N
=\Q H\cap\Ker(\K)
\end{equation}
On the other hand, if $f\in\Q H\cap\Ker(\K)$, we have $\Q\K\Q f=\Q\K f=0$. Hence $f\in \Ker(\Q\K\Q)\cap\Q H=\Ran(\pi_0^{\Q}|_{\Q H})$ and $\Q H\cap\Ker(\K)\subset \Ran(\pi_0^{\Q}|_{\Q H})$. Combining with \eqref{3}, we know $\Ran(\pi_0^{\Q}|_{\Q H})=\Q H\cap \Ker(\K)$. Since the orthogonal projection operator on $\Q H\cap \Ker(\K)$ is given by $\pi_0|_{\Q H}$, we get  $\pi_0^{\Q}|_{\Q H}=\pi_0|_{\Q H}$.

\vspace{0.3cm}
\noindent 
\bibliographystyle{plain}
\bibliography{Hypocoercivity_EMZ}
\end{document}